\documentclass[aps,pra,twocolumn,showpacs,superscriptaddress]{revtex4-1}

\usepackage{amssymb,amsmath,amsfonts,amsthm}
\usepackage{bbold}
\usepackage{hyperref}
\usepackage{mathrsfs}
\usepackage[english]{babel}
\usepackage{graphicx,bm,tikz,array,latexsym}

\usepackage{listings}
\usepackage{color}

\definecolor{dkgreen}{rgb}{0,0.6,0}
\definecolor{gray}{rgb}{0.5,0.5,0.5}
\definecolor{mauve}{rgb}{0.58,0,0.82}

\lstset{frame=tb,
  language=Python,
  aboveskip=3mm,
  belowskip=3mm,
  showstringspaces=false,
  columns=flexible,
  basicstyle={\small\ttfamily},
  numbers=none,
  numberstyle=\tiny\color{gray},
  keywordstyle=\color{blue},
  commentstyle=\color{dkgreen},
  stringstyle=\color{mauve},
  breaklines=true,
  breakatwhitespace=true,
  tabsize=3
}

\newtheorem{theorem}{Theorem}[section]

\newcommand{\abs}[1]{\left| #1 \right|} % for absolute value
\newcommand{\avg}[1]{\left< #1 \right>} % for average
\newcommand{\ket}[1]{\left| #1 \right>} % for Dirac bras
\newcommand{\bra}[1]{\left< #1 \right|} % for Dirac kets
 % for Dirac brackets

\begin{document}

\title{Observer-independence in the presence of a horizon}

\author{Ian T. Durham$^{1}$}
\email[]{idurham@anselm.edu}
\affiliation{Department of Physics, Saint Anselm College, Manchester, NH 03102}
\date{\today}

\begin{abstract}
In the famous thought experiment known as Wigner's friend, Wigner assigns an entangled state to the composite quantum system consisting of his friend and her observed system. In the context of this thought experiment, Brukner recently derived a no-go theorem for observer-independent facts, i.e. those facts that would be common to both Wigner and his friend, and initial experimental tests have given it strong support. In this article, we demonstrate a loophole in the theorem that manifests in the presence of a horizon even if Wigner and his friend are on the same side of the horizon. The loophole requires that Wigner and his friend both unambiguously demonstrate that they are in inertial frames. We then argue that unambiguously proving that they are both in inertial frames requires unambiguously proving the existence of Unruh radiation. However, we also argue that if they cannot prove this, then there are still limitations on observer-independent facts that require that some of those facts remain unknowable. As such, observer-independent facts remain illusory.
\end{abstract}

\maketitle

\section{Introduction}
In 1961 Eugene Wigner introduced a curious thought experiment that has recently garnered renewed interest in the physics literature~\cite{Wigner:1961aa}. In the experiment, now known as ``Wigner's friend'', the aforementioned friend performs a measurement on a quantum system inside a sealed laboratory. A so-called ``super-observer'' (played by Wigner himself in the original paper) is placed outside the laboratory. The outcome of the friend's measurement is reflected in some property of the device that is performing the measurement, e.g. a pointer reading or audible device click. Wigner describes the process unitarily based solely on the information to which he has access. At the conclusion of the process the friend's description of the original quantum state consists of a projection of that state into one that corresponds to the outcome reported by the device. By contrast, Wigner assigns a specific entangled state to the system and the friend which he can verify through some further experiment (e.g. communicating with the friend). The main question that Wigner provoked with his thought experiment was what happens to the state from his (Wigner's) point-of-view when his friend observes a definite outcome? Does the state collapse for Wigner at that exact moment or does it only collapse for Wigner when he receives information about the result of his friend's measurement? If it is the latter, then how can we reconcile the two apparently different accounts of the original measurement process?

Wigner's original intent with his thought experiment was to support his view that consciousness was necessary in order to collapse the wave function. It nevertheless serves as an interesting way to compare various interpretations of quantum mechanics. In particular, in objective collapse theories, Wigner's state assignment can be statistically disproven by carrying out many verifications hence showing that the state is a definite, objective property of the universe~\cite{Ghirardi:1986aa,Diosi:1989aa,Penrose:1996aa}. This is in contrast to other interpretations where the state is only a relative property, e.g. it is projected relative to the friend and in a superposition relative to Wigner~\cite{Rovelli:1996aa,Fuchs:2017aa}. Both Wigner's and his friend's description of the state are equally valid in a relative sense since they accurately describe the world Wigner and his friend experience respectively. When they finally communicate, Wigner can update his description based on information he receives from his friend concerning the results of the friend's measurement. Thus, strictly speaking, there is no inconsistency with quantum theory in this case. In other words, Wigner's friend poses no apparent problems for epistemic interpretations of quantum theory.

There has been renewed interest in this thought experiment in recent years, most notably in works by Brukner~\cite{Brukner:2015aa,Brukner:2018aa} and by Frauchiger and Renner~\cite{Frauchiger:2018aa}. In particular, Brukner derived a Bell-type no-go theorem for observer-independent facts that aims to show that there can be no theory in which Wigner's and Wigner's friend's facts can jointly be considered to be locally objective properties of the universe. That is, Brukner's theorem denies that either Wigner's or his friend's description of the state is a (locally) objective property of a given theory. As Brukner makes clear, the objective properties or ``facts'' described are understood to mean ``immediate experiences of observers.'' That is, it may refer to what some interpretations of quantum mechanics deems to be ``real'' (e.g. wave functions, Bohmian trajectories, etc.) only to the extent that these directly give rise to some sort of observable fact such as a pointer reading or detector click. Brukner's theorem was recently experimentally verified to more than five standard deviations by Proietti, et. al.~\cite{Proietti:2019aa}.

The relative state description of Wigner and his friend is analogous to the situation described by another famous thought experiment: Einstein's twin paradox, the first explanation of which was due to Paul Langevin~\cite{Langevin:1911aa}. In the twin paradox a pair of synchronized clocks becomes un-synchronized when one clock undergoes relativistic acceleration while the other remains inertial. In Langevin's retelling, observers co-moving with the accelerated clock remain in regular contact with observers who remain in the frame of the inertial clock. Thus it is that the inertial observers will appear to see the accelerated clock slow down while observers co-moving with that same clock will see it behave normally. Each description is equally valid in a relative sense since both clocks (and their associated observers) maintain separate worldlines. When the worldlines re-intersect, both sets of observers can simply compare their respective results in order to arrive at a consistent description of the state of the clock at that instant. As such, special relativity contains no inconsistency in this case.

In this article we investigate a scenario in which the experimental setup employed by Brukner in the proof of his theorem is embedded in a twin-paradox-like setting. The relative acceleration that is introduced occurs between two identical laboratories, each of which contains an observer in the role of Wigner's friend who makes a measurement on a quantum system in their possession. Likewise, each laboratory includes a co-moving external super-observer in the role of Wigner who communicates with the observer inside the laboratory and whose description of the quantum system depends on the information received. We then extend the discussion to a single Wigner's friend-type experiment in which the relative acceleration is then between Wigner and his friend.

We show that this setup brings to light a loophole in the no-go theorem in the presence of a horizon. However, we also show that if this loophole proves fatal for the theorem and restores observer-independent facts, it comes at the price of fully knowing those facts, suggesting that the existence of such facts remains illusory. We begin by briefly reviewing Brukner's theorem and we reformulate his proof in terms of free bosonic fields. We then show the dependence of this proof the acceleration one or both halves of the experiment and argue that this amounts to a coordinate transformation. We further argue that proofs of theorems should be coordinate-independent. However, we also show that proving this would require proving the existence of Unruh radiation. If this proves to not be possible (a fact that is somewhat open to debate~\cite{Kiaka:2018aa}), then we show that there must still be some limitations on these facts. Specifically we argue that some of them must remain unknowable. We begin with a review of Brukner's theorem and its proof, but rather than using an entangled spin-1/2 system, our model uses an entangled two-mode free bosonic field which makes analysis in the presence of acceleration simpler since it reduces a portion of the problem to a discussion of the Unruh effect.

\section{Observer-independent facts for inertial observers}\label{sec2}

A typical Wigner's friend thought experiment involves some two-level quantum system that can give rise to two outcomes upon measurement. The outcomes are recorded by a measurement apparatus that eventually is read by Wigner's friend and committed to memory. The measurement apparatus as well as Wigner's friend are taken to be inside an isolated laboratory. Wigner is placed outside this laboratory and can perform a quantum measurement on the overall system consisting of the two-level quantum system and the laboratory. It is usually assumed that all experiments are carried out a sufficient number of times in order to ensure reliable statistics.

Deutsch proposed a model of the Wigner's friend thought experiment for which it was possible for Wigner to gain direct knowledge about whether or not the friend had observed a definite outcome without her (the friend) having to reveal which outcome she observed~\cite{Deutsch:1985aa}. For example, let us assume that a detector that is sensitive to two modes $j$ and $k$ of a free bosonic field in Minkowski space $\mathcal{M}$ is used to make a measurement on such a field that is prepared in the state $\ket{\phi^+}_B=\frac{1}{\sqrt{2}}(\ket{0_j}_{B}^{\mathcal{M}} + \ket{1_k}_{B}^{\mathcal{M}})$ where $\ket{0}_{B}^{\mathcal{M}}$ and $\ket{1}_{B}^{\mathcal{M}}$ refer to the vacuum and single-particle states respectively. After the measurement has been completed, the measurement apparatus is found to be in one of many perceptively different macroscopic configurations corresponding to, for example, a particular pointer reading or an audible click. In Deutsch's model, no assumptions need to be made concerning the friend's formal description of the result. It is simply enough that she perceives a definite outcome.

Wigner then uses quantum theory to describe the friend's measurement as a unitary transformation. The possible states of the field $\ket{0_j}_{B}^{\mathcal{M}}$ and $\ket{1_k}_{B}^{\mathcal{M}}$ are assumed to be entangled with the perceptively different macroscopic configurations of the apparatus, the laboratory, and the friend's memory. We can represent these configurations using a pair of orthogonal states $\ket{F_0}_F$ and $\ket{F_1}_F$ respectively. The state of the composite system consisting of the field mode and the friend's laboratory is thus
\begin{equation}
\ket{\Phi}_{BF} = \frac{1}{\sqrt{2}}\left(\ket{0_j}_{B}^{\mathcal{M}}\ket{F_0}_F + \ket{1_k}_{B}^{\mathcal{M}}\ket{F_1}_{F}\right).
\label{BFstate}
\end{equation}
The particular phase between the two amplitudes of equation~\eqref{BFstate} is specified by Wigner via the measurement interaction. It is his specification of this phase that avoids the necessity of describing the state as an incoherent mixture of the two possibilities. Wigner can verify his state assignment by performing a Bell state measurement in the bases
\begin{align}
\ket{\Phi^{\pm}}_{BF} & =\frac{1}{\sqrt{2}}(\ket{0_j}_{B}^{\mathcal{M}}\ket{F_0}_F \pm \ket{1_k}_{B}^{\mathcal{M}}\ket{F_1}_{F}), \nonumber \\
\ket{\Psi^{\pm}}_{BF} & =\frac{1}{\sqrt{2}}(\ket{0_j}_{B}^{\mathcal{M}}\ket{F_1}_F \pm \ket{1_k}_{B}^{\mathcal{M}}\ket{F_0}_{F}).
\label{bellbases}
\end{align} 
In Deutsch's proposal, Wigner obtains direct knowledge about whether or not the friend actually observed a definite outcome without requiring that the friend reveal the result. For instance, the friend could pass a note through a slot in the laboratory door on which is written either ``I have observed a definite outcome'' or ``I have \textit{not} observed a definite outcome'' as shown in Figure~\ref{dfig}.
\begin{figure}
\includegraphics[width=0.485\textwidth]{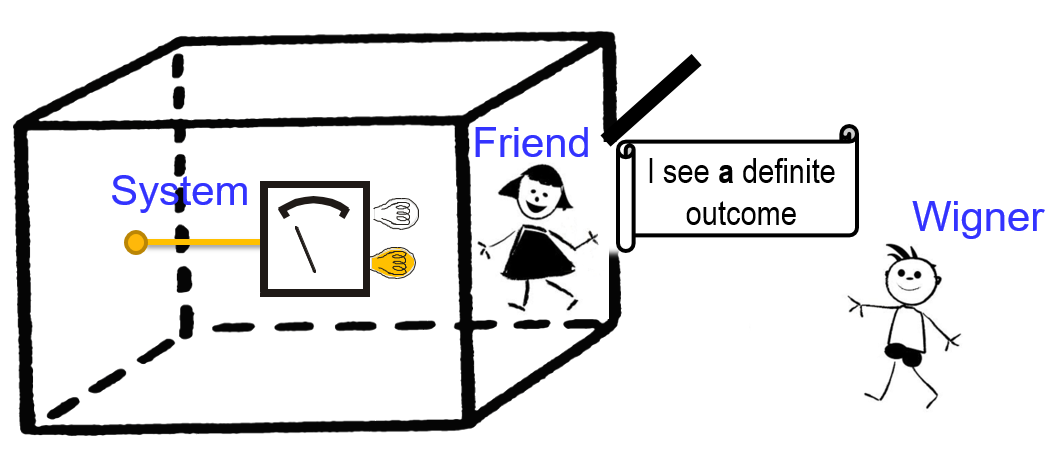}
\caption{\label{dfig} Wigner can obtain information regarding whether or not his friend has performed the measurement without learning the result of the measurement. For instance, the friend could simply pass a note through the laboratory door that indicates whether or not the friend obtained a definite outcome. (Figure courtesy of \u{C}aslav Brukner.)}
\end{figure} 
As long as the message does not contain any information regarding the actual outcome of the friend's measurement, the state of the encoded message can be factored out of the total state which then would read
\begin{align}
\ket{\Phi}_{BFI} & = \frac{1}{\sqrt{2}}\left(\ket{0_j}_{B}^{\mathcal{M}}\ket{F_0}_F + \ket{1_k}_{B}^{\mathcal{M}}\ket{F_1}_{F}\right) \nonumber \\
& \quad \otimes\ket{\textrm{``I have observed a definite outcome.''}}_I.
\label{BFIstate}
\end{align}
(Here we are assuming that the friend always observes a definite outcome when performing a measurement.) Because the state of the message can be factored out, we see that Wigner can obtain direct evidence for the existence of his friend's facts without knowing what those facts are. As such, both Wigner's facts and his friend's facts appear to coexist.

In a framework in which we can account for observer-independent facts, we should be able to jointly assign truth values to the observational statements $A_1$: ``Wigner's friend's measurement apparatus indicates the bosonic field is in the vacuum state'' and $A_2$: ``Wigner's measurement apparatus indicates the overall state is $\Phi$.'' Wigner, of course, can learn the truth value of either of these two statements. If he performs a Bell measurement, he obtains the truth value for $A_2$ whereas if he, for example, simply opens the laboratory door and speaks to his friend, he can obtain the truth value for $A_1$. But in order for observer-independent facts to exist, we must be able to assign a truth value to both $A_1$ and $A_2$ \textit{independently} of which measurement Wigner performs, i.e. independently of whether he makes a Bell measurement or opens the lab door. In other words, if the outcome $\Phi$ is observer-independent, then $A_1$ is true regardless of whether or not Wigner actually makes that measurement.

Brukner formalizes this by postulating that the truth values of the propositions $A_i$ for all observers form a Boolean algebra that is equipped with a countably additive positive measure $p(A)\ge 0$ for all propositions that correspond to the probability that a given proposition is true. For the scenario described here, we assume that we can jointly assign truth values ($+1=$ true, $-1=$ false) to the statements $A_1$ and $A_2$ and thus may also assign joint a joint probability $p(A_1 = \pm 1,A_2 = \pm 1)$. Brukner's no-go theorem, which is a Bell-type theorem, then uses the fact that $A_1$ and $A_2$ do not commute and is stated as follows~\cite{Brukner:2018aa}:

\begin{theorem}[Brukner's no-go theorem]\label{bruktheo}
The following statements are incompatible, i.e. they lead to a contradiction:
\begin{enumerate}
\item
Quantum predictions hold at any scale, even if the measured system contains objects as large as an ``observer'' (including her laboratory, memory, etc.). This is the assumption that quantum theory is universally valid.
\item
The choice of the measurement settings of one observer has no influence on the outcomes of any other other distant observer(s). This is the assumption of locality.
\item
The choice of measurement settings is statistically independent from the rest of the experiment. This is the freedom of choice assumption.
\item
One can jointly assign truth values to the propositions about observed outcomes (``facts'') of different observers (as just described).
\end{enumerate}
\end{theorem}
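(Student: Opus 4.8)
The plan is to prove Theorem~\ref{bruktheo} by deriving a Bell-type inequality from assumptions (2)--(4) and then exhibiting a quantum state, guaranteed to exist by assumption (1), that violates it. The strategy mirrors the standard derivation of the CHSH inequality but adapted to the Wigner's-friend setting, where the two ``parties'' are a Wigner together with his friend. First I would set up a two-sided version of the thought experiment: each side consists of a friend who measures her bosonic field mode and a super-observer (Wigner) who can either query the friend's recorded outcome or perform a Bell measurement on the composite field-plus-laboratory system. On each side one thus has two dichotomic observables, analogous to $A_1$ (the friend's recorded fact) and $A_2$ (Wigner's measurement of $\Phi$), each taking values $\pm 1$.

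Next I would use assumption (4), the joint assignability of truth values, to argue that there exists a well-defined joint probability distribution over all four $\pm 1$ outcomes on the two sides. Assumption (3), freedom of choice, guarantees that this distribution does not depend on which measurement settings are actually chosen, so the measured correlations are marginals of a single underlying distribution. Assumption (2), locality, then ensures that the observable chosen on one side does not affect the outcome statistics on the other. Together these three premises are exactly the ingredients of a local hidden-variable model, so the correlation function $E(A_i, B_j)$ must satisfy a CHSH-type bound, namely $\lvert E(A_1,B_1) + E(A_1,B_2) + E(A_2,B_1) - E(A_2,B_2) \rvert \le 2$. The key step here is to justify that the friend's perceived-but-unrevealed outcome can legitimately be treated as a definite $\pm 1$ value entering the inequality; this is precisely where Deutsch's observation --- that Wigner can confirm a definite outcome exists via the factorable message state of equation~\eqref{BFIstate} --- does the work, licensing us to assign $A_1$ a truth value independently of whether Wigner actually opens the door.

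Finally I would invoke assumption (1) to supply the quantum-mechanical prediction. Because quantum theory is taken to hold at the scale of the entire laboratory, the relevant composite state is the entangled Bell state $\ket{\Phi^+}_{BF}$ of equation~\eqref{bellbases}, and Wigner's Bell-basis measurements together with the friend's readout realize the four noncommuting observables. For an appropriate choice of the two measurement settings on each side the quantum correlations attain the Tsirelson value $2\sqrt{2}$, exceeding the classical bound of $2$. This contradiction between the inequality forced by (2)--(4) and the value predicted by (1) establishes that the four statements cannot hold simultaneously.

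The step I expect to be the main obstacle is the second one: rigorously showing that the three premises (2)--(4) genuinely suffice to place all four observables into a single joint distribution despite the fact that $A_1$ and $A_2$ do not commute. Noncommuting observables have no joint distribution in ordinary quantum mechanics, so the entire force of the theorem rests on the \emph{counterfactual} reading sanctioned by assumptions (3) and (4) --- that each friend's fact has a definite value whether or not it is the one Wigner elects to measure. Making this counterfactual definiteness precise, and cleanly separating it from the locality assumption so that the CHSH bound follows without circularity, is the delicate part; the subsequent quantum violation is then a routine computation.
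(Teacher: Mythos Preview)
Your strategy matches the paper's: assumptions (2)--(4) are used to justify a joint distribution over the four $\pm 1$ outcomes and hence the CHSH bound $S\le 2$, and assumption (1) is then invoked to exhibit a quantum state that reaches $S=2\sqrt{2}$. The paper carries this out with super-observers Alice and Bob outside labs containing friends Charlie and Debbie, and takes $A_1=A_z$, $A_2=A_x$ (and similarly for $B$) as the Pauli-like operators of equations~\eqref{az}--\eqref{ax} on the composite field-plus-lab states.

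There is one concrete slip in your final step. You identify ``the relevant composite state'' as the single-wing Bell state $\ket{\Phi^+}_{BF}$ of equation~\eqref{bellbases}, but that state entangles one bosonic mode with its own friend's lab and carries no correlations across the Alice--Bob cut, so it cannot violate the CHSH inequality. The paper instead prepares the \emph{two friends' modes} in the entangled state of equation~\eqref{initstate}, $\ket{\psi}_{B_1B_2}=-\sin(\theta/2)\ket{\phi^-}_{B_1B_2}+\cos(\theta/2)\ket{\psi^+}_{B_1B_2}$; after Charlie's and Debbie's measurements (unitary from the super-observers' viewpoint) the four-party state $\ket{\tilde{\Psi}}$ is entangled across the two wings, and the choice $\theta=\pi/4$ yields $S=2\sqrt{2}$. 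With that correction your outline coincides with the paper's proof.
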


\begin{proof}
Consider a pair of super-observers (Alice and Bob) who play the role of Wigner in a Deutsch-like Wigner's friend experiment as shown in Figure~\ref{bellfig}. 
\begin{figure}
\includegraphics[width=0.485\textwidth]{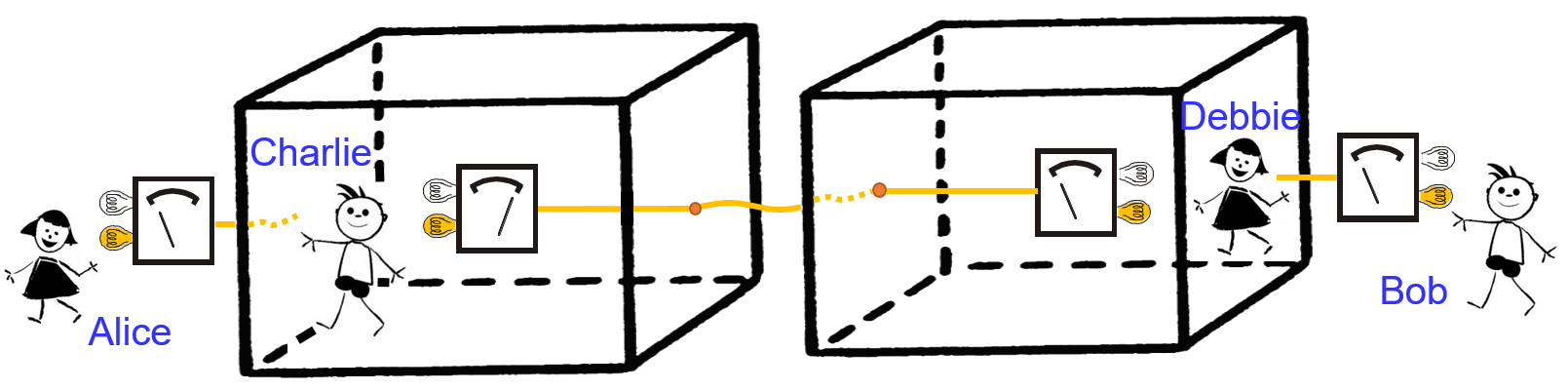}
\caption{\label{bellfig} A Bell-type experiment with two entangled Wigner's friend scenarios. Alice and Bob are the super-observers who perform measurements on the respective labs containing Charlie and Debbie. Charlie and Debbie each perform a quantum measurement on one-half of an entangled system. (Figure courtesy of \u{C}aslav Brukner.)}
\end{figure}
That is, these super-observers may each carry out experiments on a system that includes a laboratory containing an observer (Charlie and Debbie respectively) who performs an experiment on a free bosonic field mode. We can thus perform a Bell-type test for which Alice chooses between measurements $A_1$ and $A_2$ which correspond to the statements that Charlie and Alice can respectively make about their measurement outcomes. Likewise, $B_1$ and $B_2$ are similar measurements that apply to Debbie and Bob respectively. The assumptions (2), (3), and (4) define hidden variables that pre-determine the values of $A_1$, $A_2$, $B_1$, and $B_2$ to be either $+1$ or $-1$. As such we can also define a joint probability $p(A_1,A_2,B_1,B_2)$ whose marginals satisfy the Clauser-Horne-Shimony-Holt (CHSH) inequality: $S = \avg{A_1B_1} + \avg{A_1B_2} + \avg{A_2B_1} - \avg{A_2B_2} \le 2\sqrt{2}$.

Suppose that Charlie and Debbie initially share an entangled two-mode free bosonic field that is in the state
\begin{equation}
\ket{\psi}_{B_1B_2}=-\sin\frac{\theta}{2}\ket{\phi^-}_{B_1B_2}+\cos\frac{\theta}{2}\ket{\psi^+}_{B_1B_2}
\label{initstate}
\end{equation}
where
\begin{align}
\ket{\phi^-}_{B_1B_2} & =\frac{1}{\sqrt{2}}\left(\ket{0_j}_{B_1}^{\mathcal{M}}\ket{0_k}_{B_2}^{\mathcal{M}} - \ket{1_j}_{B_1}^{\mathcal{M}}\ket{1_k}_{B_2}^{\mathcal{M}}\right) \\
\ket{\psi^+}_{B_1B_2} & =\frac{1}{\sqrt{2}}\left(\ket{0_j}_{B_1}^{\mathcal{M}}\ket{1_k}_{B_2}^{\mathcal{M}} + \ket{1_j}_{B_1}^{\mathcal{M}}\ket{0_k}_{B_2}^{\mathcal{M}}\right)
\end{align}
and, again, $j$ and $k$ refer to the respective modes. Here Charlie controls mode $j$ and Debbie controls mode $k$ where each is inside their own laboratory. The state given by equation~\eqref{initstate} can be obtained by applying the appropriate pseudospin operators to the singlet state~\cite{Chen:2001aa}.

For Alice and Bob, the initial state together with the overall state of the laboratories is
\begin{equation}
\ket{\Psi_0} = \ket{\psi}_{B_1B_2}\ket{0}_C\ket{0}_D
\end{equation}
where the states $\ket{0}_C$ and $\ket{0}_D$ require no further characterization except to say that the observers are capable of completing a measurement.

Now we assume that Charlie and Debbie each have access to a detector that is sensitive to the single mode that is under their control (e.g. $j$ for Charlie and $k$ for Debbie). They each perform a measurement of their respective modes which amounts to determining if their mode is in the vacuum or single-particle state. From the point of view of Alice and Bob, these measurements are described by unitary transformations. Once the measurements are complete, we assume that the overall state of the entire system becomes
\begin{equation}
\ket{\tilde{\Psi}} = -\sin\frac{\theta}{2}\ket{\Phi^+} + \cos\frac{\theta}{2}\ket{\Psi^-}
\end{equation}
where
\begin{align}
\ket{\Phi^+} & =\frac{1}{\sqrt{2}}\left(\ket{A_{g}}\ket{B_{g}} - \ket{A_{e}}\ket{B_{e}}\right), \\\ket{\Psi^-} & = \frac{1}{\sqrt{2}}\left(\ket{A_{g}}\ket{B_{e}}+\ket{A_{e}}\ket{B_{g}}\right)
\end{align}
and
\begin{align}
\ket{A_{g}}=\ket{0_s}_{B_1}^{\mathcal{M}}\ket{C_{0_s}}_C, & \;\; \ket{A_{e}}=\ket{1_s}_{B_1}^{\mathcal{M}}\ket{C_{1_s}}_C, \label{alicestate}\\
\ket{B_{g}}=\ket{0_k}_{B_2}^{\mathcal{M}}\ket{D_{0_k}}_D, & \;\; \ket{B_{e}}=\ket{1_k}_{B_2}^{\mathcal{M}}\ket{D_{1_k}}_D \label{bobstate}
\end{align}
where the subscript $g$ refers to the vacuum state and the subscript $e$ refers to the single-particle state. Here the states $\ket{C_{0_s}}$ and $\ket{C_{1_s}}$ correspond to orthogonal states of Charlie's lab associated with the two measurement outcomes (and similarly for Debbie's lab). The details of these lab states is unimportant. All that matters is that they are orthogonal.

We next define two sets of binary observables in analogy to the Pauli spin operators along the $z$ and $x$ axes respectively:
\begin{align}
A_z & = \ket{A_{g}}\bra{A_{g}} - \ket{A_{e}}\bra{A_{e}} \label{az} \\
A_x & = \ket{A_{g}}\bra{A_{e}} + \ket{A_{e}}\bra{A_{g}} \label{ax}.
\end{align}
Similar operators can be defined for $B_z$ and $B_x$. In the Bell experiment, Alice chooses between $A_1 = A_z$ and $A_2 = A_x$ and likewise for Bob. In this case, $A_1$ and $B_1$ represent a Wigner's friend type of measurement while $A_2$ and $B_2$ represent a Wigner type of measurement. If we set $\theta = \pi/4$, then we find a violation of a CHSH-type inequality of
\begin{equation}
S = \avg{A_1B_1} + \avg{A_1B_2} + \avg{A_2B_1} - \avg{A_2B_2}\le 2.
\label{chsh}
\end{equation}
with a value of $S = 2\sqrt{2}$ which is a maximal violation in agreement with Brukner's results for spin-1/2 systems.\end{proof}

\section{Observer-independent facts in the presence of a horizon}\label{horizon}
Let us now consider the situation if there is relative acceleration between the two halves of the experiment. That is, let us uniformly co-accelerate Bob, Debbie, and her laboratory with some proper acceleration $a$. It is worth noting that the following result holds even if both halves of the experiment are co-accelerated but, for simplicity we only accelerate one half~\cite{Fuentes-Schuller:2005aa}.

Proper acceleration is interpreted as a physical acceleration, i.e. an acceleration that is measured by some device such as an accelerometer. It is therefore measured relative to an inertial observer who is momentarily at rest with respect to the object under acceleration. In this case, the states of mode $k$ must be specified in Rindler coordinates in order to properly describe what Debbie and Bob observe. If we consider just a single spatial dimension $z$, the worldlines of uniformly accelerated observers in Minkowski space correspond to hyperbolae to the left (region I) and right (region II) of the origin on a spacetime diagram as in Figure~\ref{rindlerwedge}. 
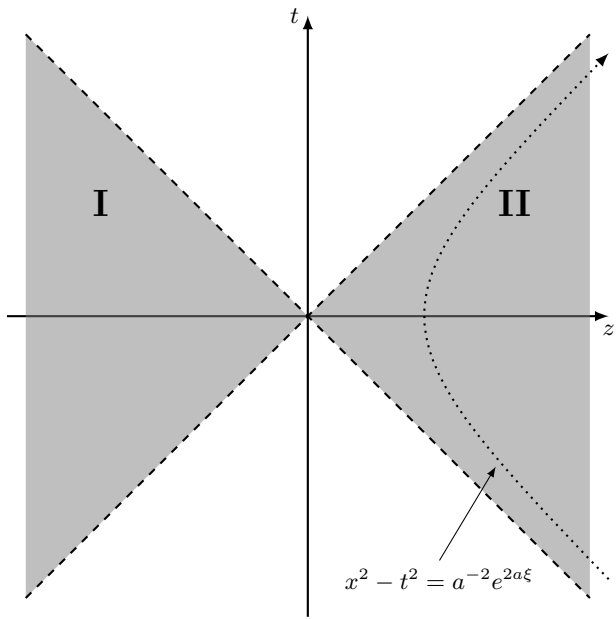
\begin{figure}
\begin{center}
\begin{tikzpicture}
% axes
\draw[thick,-latex] (-4,0) -- (4,0);
\node[below] at (4,0) {$z$};
\draw[thick,-latex] (0,-4) -- (0,4);
\node[left] at (0,4) {$t$};
% Rindler wedges
\fill[gray,opacity=0.5] (-3.75,-3.75) -- (0,0) -- (-3.75,3.75) -- (-3.75,-3.75);
\fill[gray,opacity=0.5] (3.75,3.75) -- (0,0) -- (3.75,-3.75) -- (3.75,3.75);
\draw[thick,dashed] (-3.75,-3.75) -- (3.75,3.75);
\draw[thick,dashed] (-3.75,3.75) -- (3.75,-3.75);
\node at (-2.75,1.5) {\Large{\textbf{I}}};
\node at (2.75,1.5) {\Large{\textbf{II}}};
% Plot a constant acceleration line
\draw[thick,dotted,-latex] (4,-3.5) .. controls (0.75,-0.125) and (0.75,0.125) .. (4,3.5);
% Label constant acceleration
\draw[-latex] (1.75,-3.25) -- (2.5,-2);
\node at (1.75,-3.5) {$x^2 - t^2 = a^{-2}e^{2a\xi}$};
\end{tikzpicture}
\end{center}
\caption{\label{rindlerwedge} An accelerated observer in Minkowski space is a static observer in a Rindler space. Worldlines of accelerated observers in Minkowski space form a parabola $x^2 -t^2 = a^{-2}e^{2a\xi}$ for constant $a$ and $\xi$. Observers on such a trajectory in region II are causally disconnected from region I and vice-versa.}
\end{figure}
These regions are bounded by light-like asymptotes that form the Rindler horizon. Rindler coordinates are defined as
\begin{align}
t = a^{-1}e^{a\xi}\sinh a\tau, & \;\; z = a^{-1}e^{a\xi}\cosh a\tau, \;\; |z|<t \nonumber \\
t = -a^{-1}e^{a\xi}\sinh a\tau, & \;\; z = a^{-1}e^{a\xi}\cosh a\tau, \;\; |z|>t
\label{rindler}
\end{align}
where $a$ is the proper acceleration as defined above, $\xi$ is a space-like coordinate, and $\tau$ is the proper time. The Minkowski vacuum state, 
\begin{equation}
\ket{0}^{\mathcal{M}}=\Pi_j \ket{0_j}^{\mathcal{M}},
\label{minkvac}
\end{equation}
which is defined as the absence of any particle excitation in any of the $j$ modes can be expressed in terms of a product of two-mode squeezed states of the Rindler vacuum~\cite{Walls:2008aa} as
\begin{equation}
\ket{0_k}^{\mathcal{M}} \sim \frac{1}{\cosh r}\sum_{n=0}^{\infty}\tanh^n r \ket{n_k}_I \ket{n_k}_{II}, 
\label{minktwomode}
\end{equation}
where $\cosh r = (1 - e^{-2\pi\Omega})^{-1/2}$, $\Omega := \abs{k}c/a$, and $a$ is the acceleration. The states $\ket{n_k}_I$ and $\ket{n_k}_{II}$ are Fock states and refer to the mode decomposition in regions I and II of Rindler space respectively. Likewise, the single-particle states can be written as~\cite{Fuentes-Schuller:2005aa}
\begin{equation}
\ket{1_k}^{\mathcal{M}} = \frac{1}{\cosh^2 r}\sum_{n=0}^{\infty}\tanh^n r\sqrt{n+1}\ket{(n+1)_k}_I\ket{n_k}_{II}.
\label{minkpart}
\end{equation}
For the sake of argument, let us assume that Debbie, her laboratory, and Bob are all uniformly accelerated in a direction such that they are causally disconnected from region I, i.e. they follow a trajectory given by the hyperbola $x^2 - t^2 = a^{-2}e^{2a\xi}$ as in Figure~\ref{rindlerwedge}. As such, equation~\eqref{bobstate} becomes
\begin{align}
\ket{B_g} & = \frac{1}{\cosh r}\sum_{n=0}^{\infty}\tanh^n r \ket{n_k}_I\ket{n_k}_{II}\ket{C_{0_k}}_C \nonumber \\
\ket{B_e} & = \left(\frac{1}{\cosh^2 r}\sum_{n=0}^{\infty}\tanh^n r\sqrt{n+1}\right) \nonumber \\
& \quad\quad \times \ket{(n+1)_k}_I\ket{n_k}_{II}\ket{C_{1_k}}_C. \label{rindlerbob}
\end{align}
Since Bob, Debbie, and her laboratory are not causally connected to region II, we must trace over the states in this region which gives a mixed state, 
\begin{equation}
\rho^{I}_{AB} = \textrm{tr}_{II}\ket{\tilde{\Psi}}\bra{\tilde{\Psi}}.
\label{mixedstate}\end{equation} 
The expansion of this density matrix is given in appendix~\ref{app}.

We can then form a pair of binary observables for Bob in analogy to equations~\eqref{az} and~\eqref{ax} and, once again setting $\theta=\pi /4$, we can test a CHSH-type inequality of the form shown in equation~\eqref{chsh}. The expectation values in this inequality are of the form $\avg{A_1B_1} = \textrm{tr}(\rho^{I}_{AB}A_1B_1)$. The problem with evaluating this inequality, however, is that equation~\eqref{rindlerbob} includes an infinite (vector) sum over all values of $n$ which represents the fact that Fock space is infinite in size. However, the sum is convergent and the size of the Fock space can be truncated by setting $n_{\textrm{max}}=N$ when $\tanh^Nr < \epsilon$ for some value of $\epsilon$. That is, we choose to ignore any vectors $\tanh^n r\ket{n}$ in the sum for which $\tanh^n r$ is sufficiently small. Nevertheless, the size of the density matrix is still quite large even for small values of $N$ (e.g. it is $81\times 81$ for $N=3$). As such it is necessary to evaluate this numerically. Of particular interest to us here is the dependence of $S$ on the proper acceleration $a$.

We chose to implement our model using Python's QuTiP package which is especially well-suited for working with Fock states~\cite{Johansson:2012aa,Johansson:2013aa} and our code is given in Appendix~\ref{code}. It turns out that the value of $\epsilon$ must be chosen carefully to ensure numerical stability while also not sacrificing any relevant physics. It is reasonable to expect that the dependence of $S$ on $a$ ought to be consistent with the dependence of the logarithmic negativity on $r$ as given in the work of Fuentes and Mann~\cite{Fuentes-Schuller:2005aa}.  Values of $\epsilon$ below roughly 0.1 demonstrate strong instabilities for values of $r$ ranging to approximately 6.0. With $\epsilon = 0.1$, however, we obtain results that are consistent with those of Fuentes and Mann for the logarithmic negativity. Our results are shown in Figure~\ref{sversusa}.
\begin{figure}
\includegraphics[width=0.485\textwidth]{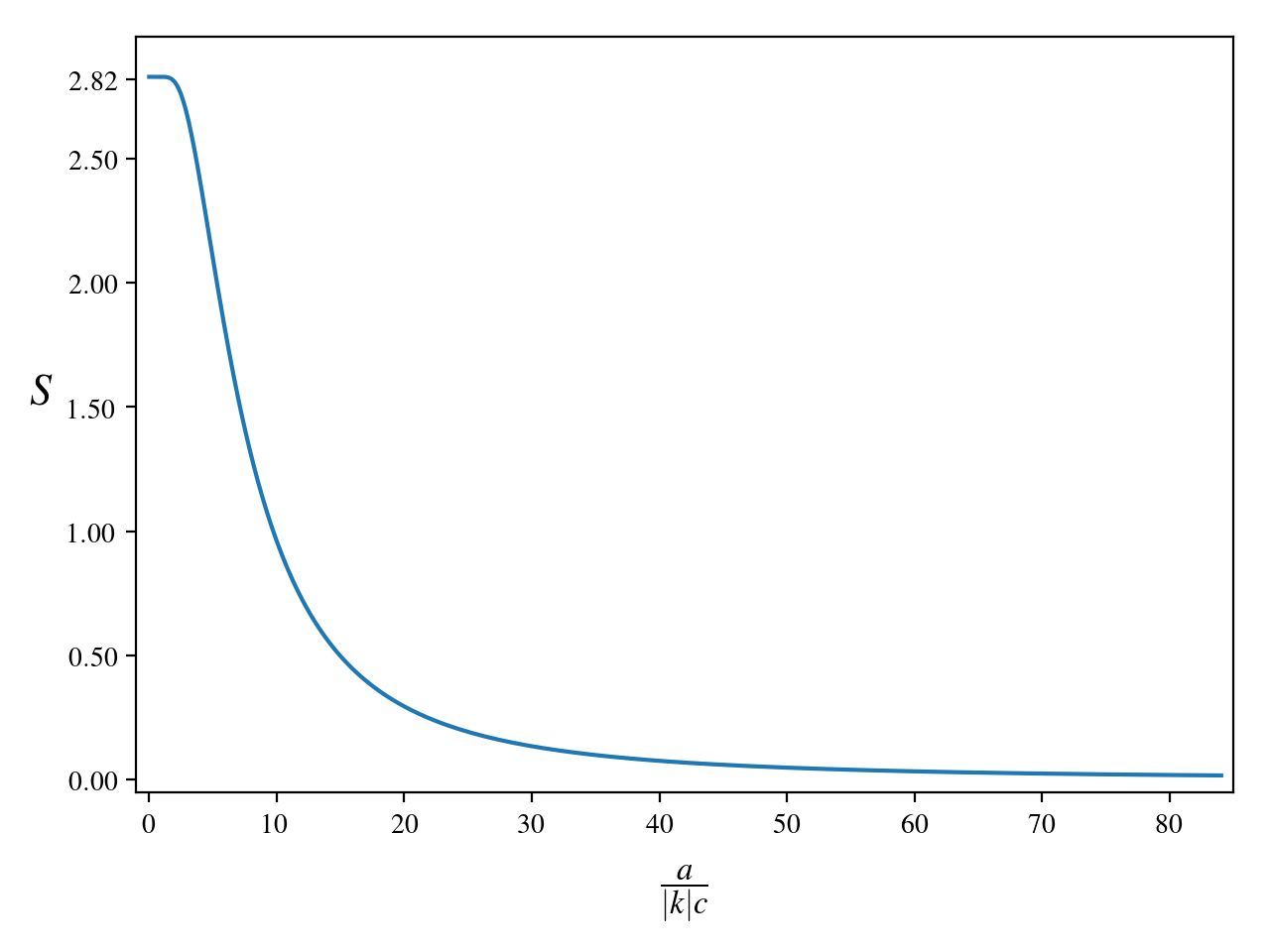}
\caption{\label{sversusa} A plot of the value of $S$ versus $a/\abs{k}c$. The transition to classicality occurs when $a/\abs{k}c\sim 5.3$. (Plot produced with Python's matplotlib package~\cite{Hunter:2007}.)}
\end{figure}
Classicality is reached when $a/\abs{k}c\sim 5.3$. The question is whether or not this necessarily proves fatal to the theorem. 

\section{Implications}\label{imply}
The most persuasive argument that can be made in favor of the theorem is that it only needs to be true once in order for its conclusions to be valid. Clearly a violation of the associated CHSH inequality is necessary to prove the theorem. That is, as long as the CHSH inequality presented in the situation here is violated under \textit{some} condition, then there is \textit{some} evidence that observer-independent facts are not universal. But is this also sufficient? 

A counter-argument runs thus. The representation of the Minkowski vacuum state (equation~\eqref{minkvac}) as a two-mode squeezed state by equation~\eqref{minktwomode} can be interpreted as a coordinate transformation between Minkowski space and Rindler space. The question is whether this is merely a coordinate transformation or if it represents some real, physical transformation. The answer to that question is not so subtle.

Suppose we chose to co-accelerate both halves of the experiment, i.e. there would be no \textit{relative} acceleration between the Alice/Charlie half and the Bob/Debbie half. Since accelerated observers in Minkowski space are static observers in Rindler space, the two halves of the experiment should appear approximately inertial \textit{relative to one another} in Rindler coordinates. Yet it is well-known that such observers will nevertheless measure each other as being in a thermal state. This fact was first discovered independently by Fulling, Davies, and Unruh~\cite{Fulling:1973aa,Davies:1975aa,Unruh:1976aa}. In fact the representation of the vacuum and single-particle states as equations~\eqref{minktwomode} and~\eqref{minkpart} was first shown by Davies and Unruh in their papers. 

The validity of Brukner's proof then rests on the ability of the two halves of the experiment to unambiguously demonstrate that they are in universally inertial frames. In other words, it is imperative that the coordinate transformation between Minkowski and Rindler coordinate systems represent a real, physical transformation and not merely a coordinate transformation. If it is merely a coordinate transformation, then Brukner's theorem would not hold because its proof would be coordinate-dependent. If, on the other hand, it is a real, physical transformation, then (at least theoretically) the situation represented in Brukner's proof would represent a different physical situation than that described by two co-accelerating pairs of observers. In that case, observer-independent facts would not exist in at least one, real physical situation and thus Brukner's theorem would have to be valid.

In some sense, then, unambiguously showing the validity of Brukner's theorem would require proving the existence of Unruh radiation. Of course it has long been assumed that Unruh radiation does exist and we simply do not yet have the ability to actually measure it due to the very high accelerations required. However, it was recently shown that \textit{massive} Unruh particles cannot be directly observed~\cite{Kiaka:2018aa}. If such a result were ever found for massless Unruh particles, Brukner's theorem could not be unambiguously proven.

Nevertheless, as we are about to show, even if Brukner's theorem could not be proven unambiguously, the existence of observer-independent facts would have certain limitations and the proof of their existence would remain illusory.

\section{Limitations on observer-independence}\label{implicate}
Brukner has suggested that one possible way around this problem is to assume that since Bob, Debbie, and her laboratory are beyond a horizon, that in some sense they don't exist from the standpoint of Alice and Charlie thus eliminating the loophole~\cite{Brukner:2019aa}. We now describe a scenario in which it is possible for Alice and Charlie to maintain knowledge of Bob's and Debbie's existence even if they pass beyond a horizon. The scenario relies on the fact that mass effects are not shielded by horizons. As such, we argue that the loophole in the theorem remains.

If Bob (and Debbie) exist then Alice should be able to assign a truth value to the proposition $A_3$: ``Bob, Debbie, and her laboratory exist.'' In order to address Brukner's point, it is merely necessary for Bob, Debbie, and her laboratory to possess some objective property that is, in some way, detectable by Alice regardless of which side of a horizon they are on. Since mass effects are not shielded by horizons, we can assume that Bob, Debbie, and her laboratory collectively possess a (relativistically invariant) mass $m$~\footnote{In order to ensure relativistic invariance, we define the mass here as the magnitude of the four-momentum.}. Their possession of property $m$ ensures that Alice can assign a truth value to proposition $A_3$ as long as $m$ is, in some way, distinctly measurable by Alice in that Alice can associate a value for $m$ to the system of Bob, Debbie, and her laboratory.

Let us suppose that Bob's half of the experiment is near an object of mass $M$. Further suppose that $m$ is a non-trivial fraction of $M$. The requirement of non-triviality ensures the distinctness of $m$ for the system of Bob, Debbie, and her laboratory. We assume that Alice has the ability to monitor the masses $m$ and $M$ and we further assume that she can initially distinguish between $m$ and $M$ (perhaps through some astrophysical method). We initially require this distinguishability to ensure that Alice can assign a definite truth value to proposition $A_3$ since, if she can't tell if $m$ is independent of $M$, she can't know whether Bob's half of the system actually exists as an entity unto itself.

Brukner's argument is that if at any point Bob's half of the experiment were to cross a horizon then Alice would no longer be able to say that Bob's half of the experiment exists since the two halves would be causally disconnected from one another. Now suppose that Bob's half of the experiment has crossed some sort of horizon, in the direction of which (from Alice's perspective) also lies $M$. For example, Bob's half of the experiment could cross the event horizon of a black hole of mass $M$. From Alice's perspective the total mass in the direction of the horizon remains $m+M$ since horizons do not shield mass effects. While $m$ may no longer be distinguishable from $M$ from Alice's point-of-view (for instance, if it crossed the event horizon of $M$), nevertheless $m$ still has a measurable effect on the surroundings. For instance, if $M$ is a black hole and the horizon is an event horizon, when $m$ crosses this horizon the black hole's mass and event horizon radius increase. Though the behavior of the system being monitored by Alice has changed, it is still consistent with a total mass of $m+M$. In other words, though $m$ has crossed some sort of horizon, its mass effects persist and thus Alice is still able to assign a truth value to proposition $A_3$. As such, the failure of the proof of Brukner's theorem in the presence of a horizon is a real, measurable effect.

This has another consequence, however. Imagine, instead, the usual Wigner's friend experiment in the form proposed by Deutsch. Recall that In a framework in which we can account for observer-independent facts, we should be able to jointly assign truth values to the propositions $A_1$ and $A_2$ as defined in Section~\ref{sec2}. Let us add to this the proposition $A_4$: ``Wigner's friend exists.'' Note that this is different from proposition $A_3$ which applies to both a Wigner-type setting (Bob) and a Wigner's friend-type setting (Debbie and her laboratory). The existence of a truth value for $A_4$ is actually \textit{not} independent from the other truth values. In particular, a truth value of $A_4$ that is consistent with the existence of Wigner's friend, necessarily implies that truth values \textit{can} (but do not have to) exist for $A_1$ and $A_2$.

But now consider a situation in which Wigner's friend and her laboratory have a measurable mass $m$ and pass a horizon that, from Alice's point-of-view, is in the direction of some mass $M$. Now suppose that at some point prior to Wigner carrying out a measurement of type $A_1$ or type $A_2$, his friend and her laboratory pass the horizon. Due to the results outlined above, the friend and her laboratory still exist since they have a measurable mass which is not shielded by a horizon. As such, Wigner can still assign a truth value to proposition $A_4$ yet cannot assign a truth value to either proposition $A_1$ or $A_2$. As such it appears that \textit{it is possible for Wigner to know that truth values for $A_1$ and $A_2$ exist and yet be unable to determine them.}

This appears to offer an interesting compromise on the part of physics in regard to observer-independent facts. If Brukner's theorem were to fail, suggesting that observer-independent facts \textit{might} still exist, it also appears to be the case that this failure would come at the cost of fully knowing those facts. In other words, it may be that in order for observer-independent facts to exist, some of those facts must remain unknowable. It is as if Wigner is prevented by the laws of physics from taking any further actions in Figure~\ref{dfig} after first receiving the friend's message. Either way, observer-independence appears to be illusory.

\section{Conclusion}
In this article we have examined the recent no-go theorem for observer-independent facts proposed by Brukner and have shown that the proof of the theorem depends on the existence of Unruh radiation. Our model employed entangled modes of a free bosonic field and our results for the dependence of the Bell parameter $S$ as a function of relative acceleration $a$ are consistent with known results for the logarithmic negativity, though we note that similar results are derivable for fermionic systems~\cite{Chaves-Otaola:2012aa}. However, we have also shown that if Brukner's theorem cannot be unambiguously proven, there remain limitations on observer-independent facts that prevent full knowledge of those facts.

It is worth noting here that the use of definite particle states for testing a situation such as this can prove problematic~\cite{Sorkin:1993aa,Dowker:2011aa}. Specifically, the bosonic field modes, though confined to the respective laboratories of Charlie and Debbie, are still highly non-local, which can lead to superluminal signaling between two observers \textit{within} the laboratory, one to the past of the spacelike hypersurface on which the observable in question is measured, and one to the future. While it is not clear that this applies in the scenario discussed in this paper, similar results can be obtained for fermionic systems which demonstrate similar behavior but do not suffer from the same non-locality issues~\cite{Chaves-Otaola:2012aa}. On the other hand, it has also been shown that massive Unruh particles cannot be directly measured~\cite{Kiaka:2018aa} and thus testing this would require the ability to work with massless fermionic fields.

\appendix
\section{Mixed state density matrix}\label{app}
The density matrix for the mixed state in the case in which Bob, Debbie, and her lab are accelerated is
\begin{widetext}
\begin{align}
\rho^{I}_{AB} & = \frac{\sin^2\left(\theta/2\right)}{2}\ket{A_g}^{\mathcal{M}}\ket{B_g}_{I}\bra{A_g}^{\mathcal{M}}\bra{B_g}_{I} - \frac{\sin^2\left(\theta/2\right)}{2}\ket{A_g}^{\mathcal{M}}\ket{B_g}_{I}\bra{A_e}^{\mathcal{M}}\bra{B_e}_{I} \nonumber \\
& -\frac{\sin\left(\theta/2\right)\cos\left(\theta/2\right)}{2}\ket{A_g}^{\mathcal{M}}\ket{B_g}_{I}\bra{A_g}^{\mathcal{M}}\bra{B_e}_{I} -\frac{\sin\left(\theta/2\right)\cos\left(\theta/2\right)}{2}\ket{A_g}^{\mathcal{M}}\ket{B_g}_{I}\bra{A_e}^{\mathcal{M}}\bra{B_g}_{I} \nonumber \\
& -\frac{\sin^2\left(\theta/2\right)}{2}\ket{A_e}^{\mathcal{M}}\ket{B_e}_{I}\bra{A_g}^{\mathcal{M}}\bra{B_g}_{I} + \frac{\sin^2\left(\theta/2\right)}{2}\ket{A_e}^{\mathcal{M}}\ket{B_e}_{I}\bra{A_e}^{\mathcal{M}}\bra{B_e}_{I} \nonumber \\
& +\frac{\sin\left(\theta/2\right)\cos\left(\theta/2\right)}{2}\ket{A_e}^{\mathcal{M}}\ket{B_e}_{I}\bra{A_g}^{\mathcal{M}}\bra{B_e}_{I} + -\frac{\sin\left(\theta/2\right)\cos\left(\theta/2\right)}{2}\ket{A_e}^{\mathcal{M}}\ket{B_e}_{I}\bra{A_e}^{\mathcal{M}}\bra{B_g}_{I} \nonumber \\
& -\frac{\sin\left(\theta/2\right)\cos\left(\theta/2\right)}{2}\ket{A_g}^{\mathcal{M}}\ket{B_e}_{I}\bra{A_g}^{\mathcal{M}}\bra{B_g}_{I} + \frac{\sin\left(\theta/2\right)\cos\left(\theta/2\right)}{2}\ket{A_g}^{\mathcal{M}}\ket{B_e}_{I}\bra{A_e}^{\mathcal{M}}\bra{B_e}_{I} \nonumber \\
& +\frac{\cos^2\left(\theta/2\right)}{2}\ket{A_g}^{\mathcal{M}}\ket{B_e}_{I}\bra{A_g}^{\mathcal{M}}\bra{B_e}_{I} + \frac{\cos^2\left(\theta/2\right)}{2}\ket{A_g}^{\mathcal{M}}\ket{B_e}_{I}\bra{A_e}^{\mathcal{M}}\bra{B_g}_{I} \nonumber \\
& -\frac{\sin\left(\theta/2\right)\cos\left(\theta/2\right)}{2}\ket{A_e}^{\mathcal{M}}\ket{B_g}_{I}\bra{A_g}^{\mathcal{M}}\bra{B_g}_{I} + \frac{\sin\left(\theta/2\right)\cos\left(\theta/2\right)}{2}\ket{A_e}^{\mathcal{M}}\ket{B_g}_{I}\bra{A_e}^{\mathcal{M}}\bra{B_e}_{I} \nonumber \\
& +\frac{\cos^2\left(\theta/2\right)}{2}\ket{A_e}^{\mathcal{M}}\ket{B_g}_{I}\bra{A_g}^{\mathcal{M}}\bra{B_e}_{I} +\frac{\cos^2\left(\theta/2\right)}{2}\ket{A_e}^{\mathcal{M}}\ket{B_g}_{I}\bra{A_e}^{\mathcal{M}}\bra{B_g}_{I}
\end{align}
\end{widetext}
where $\ket{A}^{\mathcal{M}}\ket{B}_I\bra{A}^{\mathcal{M}}\bra{B}_I = \textrm{tr}_{II}\left(\ket{A}\ket{B}\bra{A}\bra{B}\right)$ with $\ket{A}$ given by equation~\eqref{alicestate} and $\ket{B}$ given by equation~\eqref{rindlerbob}.

\section{Code for numerical simulation of acceleration dependence}\label{code}
The following Python code was used to generate Figure~\ref{sversusa}.
\begin{lstlisting}
# Brukner's no-go theorem under acceleration
# Author: Ian T. Durham

# Import packages

from math import tanh, cosh, sqrt, sin, cos, pi, log
from qutip import *
from numpy import arange
import matplotlib.pyplot as plt
from matplotlib import rc
rc('font',**{'family':'serif','serif':['Times']})
rc('text',usetex=True)

# Functions to create Minkowski vacuum and single-particle states

def vacuum(N,r):
    v = fock(N,0)
    for n in range(1,N-1):
        v += (tanh(r)**n)*fock(N,n)
    return v*(1/cosh(r))

def particle(N,r):
    p = fock(N,0)
    for n in range(1,N-1):
        p += (tanh(r)**n)*(sqrt(n+1))*fock(N,n+1)
    return p*(1/(cosh(r)**2))

# Create empty arrays for plotting

S = []
A = []

# Size of truncated Fock space

N = 3

for r in arange(0.0,2.0,0.01):

    # Minkowski states

    zero_m = fock(N,0)
    one_m = fock(N,1)

    # Rindler states

    zero_r = vacuum(N,r)
    one_r = particle(N,r)

    # Macroscopic lab states

    C_z = fock(N,0)
    C_o = fock(N,1)

    D_z = fock(N,0)
    D_o = fock(N,1)

    # Possible outcome states

    Ag = tensor(zero_m,C_z)
    Ae = tensor(one_m,C_o)

    Bg = tensor(zero_r,D_z)
    Be = tensor(one_r,D_o)

    # Reduced density matrix for the state

    Phi_p = (1/sqrt(2)) * (tensor(Ag,Bg) - tensor(Ae,Be))
    Psi_m = (1/sqrt(2)) * (tensor(Ag,Be) + tensor(Ae,Bg))

    Psi_t = -sin((pi/4)/2) * Phi_p + cos((pi/4)/2) * Psi_m

    rho = Psi_t*Psi_t.dag()

    # Alice and Bob measurement operators

    A1 = Ag*Ag.dag() - Ae*Ae.dag()
    A2 = Ag*Ae.dag() + Ae*Ag.dag()

    B1 = Bg*Bg.dag() - Be*Be.dag()
    B2 = Bg*Be.dag() + Be*Bg.dag()

    # Expectation values

    E1 = expect(rho,tensor(A1,B1))
    E2 = expect(rho,tensor(A1,B2))
    E3 = expect(rho,tensor(A2,B1))
    E4 = expect(rho,tensor(A2,B2))

    # Report value for S

    S.append(abs(E1+E2+E3-E4))
    if r == 0.0:
        A.append(0.0)
    else:
        A.append(-2*pi/(log(tanh(r)**2)))

# Plot

fig, ax = plt.subplots()
lines = ax.plot(A, S)
ax.set_yticks(list(ax.get_yticks()) + [2.82])
ax.set_ylim(-0.05,2.99)
ax.set_xlim(-1,85)

plt.xlabel(r'$\frac{a}{|k|c}$',fontsize=16,labelpad=10)
h = plt.ylabel(r'$S$',fontsize=16,labelpad=10)
h.set_rotation(0)
plt.tight_layout()
plt.show()
\end{lstlisting}

\begin{acknowledgements}
We thank \u{C}aslav Brukner for helpful discussions and for the use of his original figures. We also thank Alexander Smith for a detailed review of the manuscript. Additionally we acknowledge helpful discussions with Flavio del Santo, Martin Ringbauer, Maaneli Derakhshani and Howard Wiseman.
\end{acknowledgements}

\bibliographystyle{RS}

\bibliography{RSPADurham.bib}

\end{document}